\DeclareSymbolFont{matha}{OML}{txmi}{m}{it}
\DeclareMathSymbol{\varv}{\mathord}{matha}{118}
\DeclareUrlCommand\ULurl@@{%
  \def\UrlLeft{\uline\bgroup}%
  \def\UrlRight{\egroup}}
\def\ULurl@#1{\hyper@linkurl{\ULurl@@{#1}}{#1}}
\DeclareRobustCommand*\ULurl{\hyper@normalise\ULurl@}
\begin{document}
\title{Downlink MIMO HCNs with Residual Transceiver Hardware Impairments}
\author{Anastasios Papazafeiropoulos and Tharm Ratnarajah,   \vspace{2mm} \\
\thanks{A. Papazafeiropoulos and T. Ratnarajah are  with the  Institute for Digital Communications (IDCOM), University of Edinburgh, Edinburgh, EH9 3JL, U.K., (email: {a.papazafeiropoulos, t.ratnarajah}@ed.ac.uk). }
\thanks{This work was supported by the U.K. Engineering and Physical Sciences Research Council (EPSRC) under grant EP/L025299/1.}}\maketitle
%

\begin{abstract}
A major limitation of heterogeneous cellular networks (HCNs) is the neglect of the additive residual transceiver hardware impairments (ARTHIs). The assumption of perfect  hardware is quite strong and results in misleading conclusions. This paper models a general multiple-input multiple-output (MIMO) HCN with cell association by incorporating the RTHIs. We derive the coverage probability and shed light on the impact of the ARTHIs, when  various transmission methods are applied. As the hardware quality decreases, the coverage probability worsens. Especially, this effect is more severe as the transmit power increases. Furthermore, we verify that in an HCN, it is better to employ at each base station as few transmit antennas as possible.
\end{abstract}
\begin{keywords}
Heterogeneous cellular network, MIMO systems,transceiver hardware impairments, stochastic geometry, coverage probability.
\end{keywords}

\section{Introduction}
Recently, heterogeneous cellular networks (HCNs) have attracted a significant interest for 5th generation (5G) wireless systems~\cite{Osseiran2014}. The maturity of HCNs, started from single-input single-output (SISO) links~\cite{Andrews2011}, has enabled the coexistence of multiple-antenna strategies~\cite{Kountouris2012,Dhillon2013}.

Over the years, there has been an increasing focus on investigating the effects of transceiver hardware impairments (THIs) on the performance of wireless communication systems such as phase noise~\cite{Papazafeiropoulos2016}, high power amplifier nonlinearities~\cite{Qi2012}, In-phase/Quadrature-phase (I/Q)-imbalance~\cite{Qi2010}. Although calibrations schemes at the transmitter and compensation algorithms at the receiver  exist, their efficacy is limited, since  a certain amount of inevitable residual impairments still remains due to several reasons, e.g., the time-variation of the hardware characteristics.  Thus, the additive residual THIs (RTHIs), modeling the aggregate effect of all the residual transceiver impairments, arise~\cite{Studer2010,Bjoernson2013,Bjornson2014,Bjornson2015,Papazafeiropoulos2015b,PapazafeiropoulosMay2016}. Despite that HCNs are a candidate solution for 5G  systems~\cite{Osseiran2014}, no evaluation of the impact of the ARTHIs has taken place regarding HCNs in the literature.

In this paper,  we make a step beyond~\cite{Kountouris2012} and~\cite{Dhillon2013}, which considered ideal hardware, in order to assess the effect of the ARTHIs on HCNs. Specifically, we consider a downlink multiple-input multiple-output (MIMO) HCN in the presence of the RTHIs. Moreover, we derive the coverage probability in terms of tools from stochastic geometry. The result enables us to illustrate the impact of the ARTHIs on the performance of HCNs and draw a picture on their effects on the multiple-antenna transmission strategies.

\section{System Model}
In this paper, we consider a cellular MU-MISO system, having one BS per cell, drawn according to an independent Poisson Point Process (PPP) $\Phi_{B}$ with density $\lambda_{B}$. 
Each BS deploys a large number of antennas $M$ that is greater or equal to the number of associated users $K$, i.e., $M\ge K$. Also, the user locations are modeled by an independent PPP $\Phi_{u}$ with density $\lambda_{u}=6 \lambda_{B}$.  Moreover, the same time-frequency resource is shared by the users across all cells. Slivnyak's theorem allows conducting the analysis by focusing on a typical user found at the origin. 
We assume that the  users belong to the Voronoi cell of the nearest BS, and the set of all the cells comprise a Voronoi tessellation.
\subsection{Downlink Transmission}
We assume that the desired channel power from the BS located at $x\in \mathbb{R}^{2}$ to the typical user, found in its cell, is given by $h_{k}$, while the inter-cell interference power from another BS (located at $y_{l} \in \mathbb{R}^{2}$) is denoted by $g_{y}$. 

Assuming knowledge of perfect channel state information (CSI) at the transmitter side, we focus on ZF precoding. Hence, the received signal from the $j$th BS  to user $k$ at $x$ in its cell, after applying the ZF precoder, can be expressed as
\begin{align}
 y_{k}=\bh_{k}^{\H}  \bs_{k} \|x \|^{-\al/2}
 +\!\!\!\sum_{l\in \Phi_{B}/x}\!\!\!\! \bg_{l}^{\H}  \bs_{l} \|y_{l} \|^{-\al/2}+n_{k}, \label{signal1} 
\end{align}
where   $\bs_{k} =\bW_{k} \bd_{k} \in \mathbb{C}^{M \times 1}$ is the  transmit signal vector for the $k$th user with covariance matrix $\bQ=\EE\left[ \bs_{k}\bs_{k}^{\H}\right] $ and   $p_{k}=\tr\left( \bQ \right)$ is the associated average transmit power. In particular, we assume that the  linear precoding is denoted by the matrix $\bW_{k}\in \mathbb{C}^{M\times K}$, employed by the  BS, which multiplies the data signal vector $\bd_{k} = \big[d_{k,1},~d_{k,2},\cdots,~d_{k,K}\big]^\T \in \mathbb{C}^{K}\sim \mathcal{CN}(\b0,\bI_{K})$ for all users in that cell.  Also, $\alpha$ is the path-loss exponent parameter. The channel vectors $\bh_{k} \in \mathbb{C}^{M\times 1}$ and $\bg_{l} \in \mathbb{C}^{M\times 1}$ denote the desired and interference channel vectors between BSs located at $x\in \mathbb{R}^{2}$ and $y_{l}\in \mathbb{R}^{2}$ far from the typical user.  In the case of Rayleigh fading, the channel power distributions of both the direct and the interfering links follow the Gamma distribution~\cite{Huang2011}.  Also,  $\bn_{k} $ is an additive white Gaussian noise (AWGN) vector, such that $n_{k} \sim \mathcal{CN}\left( 0, 1\right)$.  

In practice, both the users and the BSs are affected by certain inevitable residual additive  impairments~\cite{Schenk2008,Studer2010}. 
Given the channel realizations, the conditional transmitter and  receiver  distortion noises for the $i$th link are modeled as Gaussian distributed, where their average power is proportional to the average signal power, as shown by measurement results~\cite{Studer2010}. In other words, we have\footnote{We assume that all the BSs have the same hardware impairments without any loss of generality.}
\begin{align} 
 \etv_{\mathrm{t}}&\sim \cC\cN(\b0,\delta_{\mathrm{t}}^{2}\mathrm{diag}\left( q_{1},\ldots,q_{M} \right)),\\
 \eta_{\mathrm{r}}&\sim \cC\cN(\b0,\delta_{\mathrm{r}}^{2}\|x\|^{-\al}\bh^{\H}_{k}\tr\!\left( \bQ \right)\bh_{k})
\end{align}
where  $q_{1},\ldots,q_{M}$ are the diagonal elements of $\bQ$. Note that the circularly-symmetric complex Gaussianity can be justified by the aggregate contribution of many impairments. The proportionality parameters $\delta_{\mathrm{t}}^{2}$ and $\delta_{\mathrm{r}}^{2}$   describe the severity of the residual impairments at the transmitter and the receiver side. In applications, these parameters are met as the error vector magnitudes (EVM) at each transceiver side~\cite{Holma2011}.
\begin{remark}
 The receive distortion includes the path-loss coming from the associated BS\footnote{Note that the ARTHIs
 from other BSs are negligible due to the increased path-loss. Also,  the transmit hardware impairment depends only on the transmit signal power from the tagged BS and not from the path-loss.}.
\end{remark}

Hence, the hardware impairments are written as
$  \etv_{\mathrm{t}}\sim \cC\cN(\b0,{p_{k}\delta_{\mathrm{t}}^{2}}  \Id_{K})$  and  $\eta_{\mathrm{r}}\sim \cC\cN(\b0,p_{k}{\delta_{\mathrm{r}}^{2}}\|x\|^{-\al}\|\bh_{k}\|^{2} )$.

Incorporating these parameters to~\eqref{signal1}, we obtain
\begin{align}
  y_{k}\!=\!p_{k}\bh_{k}^{\H} \! \left( \bs_{k}\!+ \! \etv_{\mathrm{t}}\right)\! \|x \|^{\!-\!\frac{\al}{2}}\!\!+\!\!\!\!\sum_{l\in \Phi_{B}/x}\!\!\!\!p_{l}\bg_{l}^{\H}  \bs_{l} \|y_{l} \|^{\!-\!\frac{\al}{2}}\!+\!\eta_{\mathrm{r}}\!+\!n_{k}\nn.
\end{align}

\begin{proposition}\label{SINR}
 The  signal-to-interference (SIR) ratio of the downlink transmission from the BS to  its typical user, accounting for transceiver hardware impairments, in an HCN can be represented  by
 \begin{align}
    \gamma_{k} \label{eq Theo 1}
    =
                \frac{
               p_{k}  h_{k} \|x \|^{-\al} 
                    }{
                   I_{ \etv_{\mathrm{t}}}\|x \|^{-\al} +  I_{l}+I_{ \etv_{\mathrm{r}}}\|x\|^{-\al}
                    },
 \end{align}
 where $h_{k}=|\bh_{k}\bw_{k,k}|^{2}\sim
       \Gamma\left( \Delta_{k},1 \right),  \mathrm{and}~  \Delta_{k}=M-K+1$. Note that $\bw_{k,k}$ is the $k$th column of $\bW_{k}$.
In addition, we have $
I_{ \etv_{\mathrm{t}}}\sim {p_{k}\delta_{\mathrm{t}}^{2}}\Gamma (M,1)$,  and $
I_{ \etv_{\mathrm{r}}}\sim  p_{k}{\delta_{\mathrm{r}}^{2}}\Gamma (M,1)$
while the total interference from all the other base stations found at a distance $\|y\|$ from the typical user 
is
 $I_{l} 
           = \sum_{l \in \Phi_{B}/x} p_{l}g_{l}\|y \|^{-\al}$, 
            where $g_{l}\sim \Gamma (K,1)$.
\end{proposition}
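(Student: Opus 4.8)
The plan is to decompose the impaired received signal $y_{k}$ term by term, read off the (conditional) power carried by each contribution, and then divide the desired power by the sum of the degradation powers to recover the ratio in~\eqref{eq Theo 1}. First I would invoke the defining property of the ZF precoder: within the serving cell the columns of $\bW_{k}$ are designed so that $\bh_{k}^{\H}\bw_{k,j}=0$ for every $j\neq k$. Consequently the intra-cell streams are cancelled and the useful part of $\bh_{k}^{\H}\bs_{k}=\bh_{k}^{\H}\bW_{k}\bd_{k}$ collapses to the single term $\bh_{k}^{\H}\bw_{k,k}\,d_{k,k}$, so the desired power is driven by the effective gain $h_{k}=|\bh_{k}^{\H}\bw_{k,k}|^{2}$ together with the transmit-power and path-loss factors appearing in the numerator. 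Because $\bw_{k,k}$ is forced into the null space of the remaining $K-1$ channel vectors, $h_{k}$ equals the squared norm of the projection of the isotropic Gaussian $\bh_{k}\sim\cC\cN(\b0,\bI_{M})$ onto a subspace of dimension $M-K+1$ that is independent of $\bh_{k}$; standard properties of such projections then give $h_{k}\sim\Gamma(M-K+1,1)$, i.e. $\Delta_{k}=M-K+1$.

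Next I would treat the two distortion contributions by conditioning on the channel and reading off the Gaussian variances. For the transmit distortion, $\bh_{k}^{\H}\etv_{\mathrm{t}}\cond\bh_{k}\sim\cC\cN(0,p_{k}\delta_{\mathrm{t}}^{2}\|\bh_{k}\|^{2})$, so its effective power is proportional to $p_{k}\delta_{\mathrm{t}}^{2}\|\bh_{k}\|^{2}$; since $\|\bh_{k}\|^{2}\sim\Gamma(M,1)$ for an $M$-dimensional standard complex Gaussian, this yields $I_{\etv_{\mathrm{t}}}\sim p_{k}\delta_{\mathrm{t}}^{2}\Gamma(M,1)$. The receive distortion is read directly from its model $\eta_{\mathrm{r}}\sim\cC\cN(0,p_{k}\delta_{\mathrm{r}}^{2}\|x\|^{-\al}\|\bh_{k}\|^{2})$: factoring out the shared serving-BS path-loss $\|x\|^{-\al}$ leaves the random gain $\|\bh_{k}\|^{2}\sim\Gamma(M,1)$, hence $I_{\etv_{\mathrm{r}}}\sim p_{k}\delta_{\mathrm{r}}^{2}\Gamma(M,1)$. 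Both distortion terms therefore enter the denominator multiplied by the common factor $\|x\|^{-\al}$, exactly as written.

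For the aggregate inter-cell interference I would observe that an interfering BS at $y_{l}$ does not steer a null toward the typical user, so its contribution is $\bg_{l}^{\H}\bW_{l}\bd_{l}\,\|y_{l}\|^{-\al/2}$. Conditioning on $\bg_{l}$ and $\bW_{l}$ and averaging over $\bd_{l}\sim\cC\cN(\b0,\bI_{K})$ gives a power proportional to $\bg_{l}^{\H}\bW_{l}\bW_{l}^{\H}\bg_{l}\,\|y_{l}\|^{-\al}$; since $\bW_{l}\bW_{l}^{\H}$ is a rank-$K$ projection and $\bg_{l}$ is isotropic and independent of $\bW_{l}$, the effective gain $g_{l}=\bg_{l}^{\H}\bW_{l}\bW_{l}^{\H}\bg_{l}$ is again the squared norm of a projection onto a $K$-dimensional subspace, so $g_{l}\sim\Gamma(K,1)$. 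Summing over the points of $\Phi_{B}\setminus\{x\}$ then produces $I_{l}=\sum_{l}p_{l}g_{l}\|y_{l}\|^{-\al}$, and collecting the desired power in the numerator against the three degradation powers in the denominator---while dropping the AWGN because an interference-limited (SIR) regime is assumed---gives~\eqref{eq Theo 1}.

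The step I expect to be the main obstacle is the rigorous justification of the two Gamma laws: $h_{k}\sim\Gamma(M-K+1,1)$ for the ZF-projected desired gain and $g_{l}\sim\Gamma(K,1)$ for the interference gain. Both rest on dimension-counting for projections of isotropic complex Gaussian vectors and on the structural assumptions placed on the precoder columns (null-space alignment for $\bW_{k}$, effective orthonormality/isotropy for $\bW_{l}$). By contrast, the distortion terms follow almost immediately once one conditions on the channel and invokes $\|\bh_{k}\|^{2}\sim\Gamma(M,1)$, and the PPP summation defining $I_{l}$ is purely a matter of bookkeeping.
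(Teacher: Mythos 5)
Your proposal is correct and takes essentially the same route as the paper's Appendix~A: the identical term-by-term power decomposition, the same conditional-Gaussian treatment of the transmit and receive distortions yielding $p_k\delta_{\mathrm{t}}^2\|\bh_k\|^2$ and $p_k\delta_{\mathrm{r}}^2\|\bh_k\|^2$ as scaled $\Gamma(M,1)$ variables, and the same $\Gamma(K,1)$ gain per interfering BS summed over $\Phi_B\setminus\{x\}$. The only (equivalent) variation is in sub-steps: you obtain $h_k\sim\Gamma(M-K+1,1)$ by projecting the isotropic $\bh_k$ onto the independent $(M-K+1)$-dimensional null-space complement, where the paper instead writes $h_k$ as a product of independent $B\left(M-K+1,K-1\right)$ and $\Gamma(M,1)$ variables citing Jindal, and your claim that $\bW_l\bW_l^{\H}$ is a rank-$K$ projection is exact only for orthonormal columns---but this is precisely the same level of looseness as the paper's own ``linear combination of $K$ complex normals with unit-norm columns'' argument, so it is not a gap relative to the paper.
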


\begin{proof}
See Appendix~\ref{SINRproof}.
\end{proof}
\section{Coverage Probability}
This section, starting with a formal definition of the coverage probability with BS locations drawn from a PPP, presents the technical derivation  of an upper bound of the downlink coverage probability of a typical user in a MIMO HCN. While different transmission techniques are employed that depend on the number of BS antennas $M$ and the number of users in each cell $K$, the inherent existence of residual additive transceiver hardware impairments is incorporated in the analysis. 

{The generality of the model allows the investigation of the effects of hardware imperfections on the coverage probability towards  a more realistic  assessment.}
\begin{definition}[\!\!\cite{Dhillon2013}]
A typical user  is  in  coverage if its effective  downlink SIR from at least one of the randomly located BSs in the network is higher
  than the corresponding target. In general, we have\footnote{We assume that the thermal noise is negligible as compared to the distortion noises and the other cells interference as showed by simulations. However, it can be included in the proposed analysis by means of some extra work.} 
\begin{align}
 p_{c}=\EE\left[\mathds{1}\!\left( 	\underset{x \in \Phi_{B}}{\cup} \mathrm{SIR\left( x \right)>T} \right)  \right],
\end{align}
where  the indicator function $ \mathds{1}(e)$ is $1$ when event $e$ holds and $0$ otherwise. 
\end{definition}

The following theorem is the main result, being unique in the research area of practical systems with hardware impairments, when the BSs are randomly positioned. It is based on the calculation of the Laplace transforms provided by means of Proposition~\ref{LaplaceTransform} and Lemma~\ref{LaplaceTransformGamma}. 
\begin{theorem}\label{theoremCoverageProbability} 
The downlink probability of coverage  $p_{c}\left( T,\lambda_{B},\alpha,\delta_{\mathrm{t}},\delta_{\mathrm{r}}  \right)$ in a general cellular network with randomly distributed multiple-antenna BSs, accounting for additive transceiver hardware impairments, is given by
 \begin{align}
&\!\!p_{c}\left(  T,\lambda_{B},\alpha,\delta_{\mathrm{t}},\delta_{\mathrm{r}} \right)\!\le\!\!\lambda_{B}\int_{l \in  \mathds{R}^2}\!\!
                    \sum_{i=0}^{\Delta-1}\sum_{k=0}^{i}\sum_{n=0}^{i-k}\binom{i}{k}\binom{i-k}{n}\nn\\
                  &\!\!\times\!\frac{\left( \!-\!1 \right)^{\!i}\!\tilde{T}^{i-k}s^{k}  }{i!}\frac{\mathrm{d}^{n}}{\mathrm{d}s^{n}}\mathcal{L}_{I_{ \etv_{\mathrm{r}}}}\!\!\left(s \right)\!                 
                  \frac{\mathrm{d}^{i-k-n}}{\mathrm{d}s^{i-k-n}}\mathcal{L}_{I_{ \etv_{\mathrm{t}}}}\!\!\left(s \right)\!\!\frac{\mathrm{d}^{k}}{\mathrm{d}s^{k}} \mathcal{L}_{I_{l}}\!\left(s \right)\mathrm{d}l,\label{coverageprobability} \!
\end{align}
where  $l=\|x\|$,  $s=\tilde{T}l^{a}$,  and  $\tilde{T}=T p_{k}^{-1}$, while $\mathcal{L}_{I_{ \etv_{\mathrm{r}}}}\!\left(s \right)$,   $\mathcal{L}_{I_{ \etv_{\mathrm{t}}}}\!\left(s \right)$, and $\mathcal{L}_{I_{l}}\!\left(s \right)$ are the Laplace transforms of the powers of the receive distortion, transmit distortion, and  interference power coming from other BSs.
\end{theorem}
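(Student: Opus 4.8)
The plan is to convert the union over base stations into a spatial integral by Palm calculus, and then to evaluate the per-link coverage probability by combining the complementary CDF of the Gamma-distributed desired power with the three Laplace transforms.

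First, writing the coverage event as $\bigcup_{x\in\Phi_{B}}\{\gamma_{k}(x)>T\}$ and applying Boole's inequality $\mathds{1}(\bigcup_{x}A_{x})\le\sum_{x}\mathds{1}(A_{x})$ gives
\[
p_{c}\le\EE\Big[\textstyle\sum_{x\in\Phi_{B}}\mathds{1}\!\left(\gamma_{k}(x)>T\right)\Big].
\]
This is the origin of the inequality sign; it is tight when $T\ge1$, since then at most one base station can exceed the threshold. I would then apply the Campbell--Mecke formula to the PPP $\Phi_{B}$ and invoke Slivnyak's theorem to replace the reduced Palm expectation by the stationary one, turning the sum into $\lambda_{B}\int_{\mathbb{R}^{2}}\mathbb{P}\!\left(\gamma_{k}(x)>T\right)\mathrm{d}x$, where the interference is now produced by an independent copy of $\Phi_{B}$ and the integrand depends on $x$ only through $l=\|x\|$.

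Next, by Proposition~\ref{SINR} the event $\gamma_{k}>T$ is, after isolating the Gamma variable $h_{k}\sim\Gamma(\Delta,1)$ with $\Delta=M-K+1$, equivalent to a threshold of the form $h_{k}>\tilde{T}\,(I_{\etv_{\mathrm{t}}}+I_{\etv_{\mathrm{r}}})+s\,I_{l}$ with $s=\tilde{T}l^{\alpha}$. Conditioning on the three mutually independent aggregate powers and using the finite-sum tail of an integer-shape Gamma law, $\mathbb{P}(h_{k}>u)=e^{-u}\sum_{i=0}^{\Delta-1}u^{i}/i!$, I would expand $u^{i}$ by the iterated binomial theorem---peeling off first the interference power (giving the index $k$ and the factor $\binom{i}{k}$) and then the receive distortion (giving $n$ and $\binom{i-k}{n}$). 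This reproduces exactly the double binomial and the coefficient $\tilde{T}^{\,i-k}s^{k}$ appearing in the statement.

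Finally, the expectation factorises over the independent powers, and each monomial-times-exponential factor is evaluated through the identity $\EE[I^{m}e^{-sI}]=(-1)^{m}\frac{\mathrm{d}^{m}}{\mathrm{d}s^{m}}\mathcal{L}_{I}(s)$, so that the powers $n$, $i-k-n$ and $k$ become the corresponding $s$-derivatives of $\mathcal{L}_{I_{\etv_{\mathrm{r}}}}$, $\mathcal{L}_{I_{\etv_{\mathrm{t}}}}$ and $\mathcal{L}_{I_{l}}$, while the three signs collapse to $(-1)^{i}$. Inserting the closed forms of these Laplace transforms supplied by Proposition~\ref{LaplaceTransform} and Lemma~\ref{LaplaceTransformGamma} yields the claimed integrand. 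I expect the only real difficulty to be bookkeeping rather than analysis: carrying the substitution $s=\tilde{T}l^{\alpha}$ consistently so that all three Laplace transforms and all derivatives are expressed in the single variable $s$, and justifying the termwise use of the derivative identity---which is legitimate because the $i$-sum is finite and each Laplace transform is smooth in $s$, so expectation, summation and differentiation may be freely interchanged.
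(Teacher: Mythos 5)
Your proposal follows essentially the same route as the paper's own proof: union (Boole) bound on the coverage event, Campbell--Mecke to pass to $\lambda_{B}\int_{\mathbb{R}^{2}}\mathbb{P}(\gamma_{k}>T)\,\mathrm{d}x$, the finite Gamma tail $e^{-u}\sum_{i=0}^{\Delta-1}u^{i}/i!$, the iterated binomial expansion producing $\binom{i}{k}\binom{i-k}{n}\tilde{T}^{\,i-k}s^{k}$, and the identity $\EE[I^{m}e^{-sI}]=(-1)^{m}\frac{\mathrm{d}^{m}}{\mathrm{d}s^{m}}\mathcal{L}_{I}(s)$ applied factor-by-factor using independence, exactly as in the paper's Appendix~B. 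The only caveat is your side remark that the bound is tight for $T\ge 1$: the usual ``at most one BS can exceed the threshold'' argument does not carry over verbatim here, since under ZF the desired gain $h_{k}\sim\Gamma(\Delta,1)$ and the interfering gain $g_{l}\sim\Gamma(K,1)$ of a given BS are different random variables, but this does not affect the validity of the upper bound you prove.
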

\begin{proof}
See Appendix~\ref{CoverageProbabilityproof}.
\end{proof}
\begin{remark}
In the ideal case of no  ARTHIs,~\eqref{coverageprobability} coincides with the coverage probability provided by Theorem~$3$ in~\cite{Dhillon2013} for single tier.
\end{remark}

\begin{proposition}\label{LaplaceTransform} 
The Laplace transform of the interference power of a  general cellular network with randomly distributed multiple-antenna BSs having additive transceiver hardware impairments is given by
\begin{align}
&\mathcal{L}_{I_{l}}\!\left({s} \right)=\exp{\!\left( - {s}^{\frac{2}{a}} \mathcal{C}\left( \al, K\right) \right)},
\end{align}
where  $\mathcal{C}\left( \al, K\right)=\frac{2 \pi \lambda_{B}}{a} \sum_{m=0}^{K}\binom{K}{m}\mathrm{B}\left( K-m+\frac{2}{a},m-\frac{2}{a} \right)$.
\end{proposition}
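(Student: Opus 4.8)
The plan is to evaluate $\mathcal{L}_{I_{l}}(s)=\EE\big[e^{-sI_{l}}\big]$ directly from the expression $I_{l}=\sum_{l\in\Phi_{B}/x}p_{l}g_{l}\|y_{l}\|^{-\al}$ established in Proposition~\ref{SINR}, exploiting the independence of the interfering fading gains $g_{l}\sim\Gamma(K,1)$ from the point process $\Phi_{B}$ together with the probability generating functional (PGFL) of a PPP. The computation is the classical stochastic-geometry Laplace-transform argument, specialized to Gamma-distributed (rather than exponential) interference power caused by the $K$-stream precoding.

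First I would write the exponential of the sum as a product over the interfering BSs, $e^{-sI_{l}}=\prod_{l\in\Phi_{B}/x}e^{-sp_{l}g_{l}\|y_{l}\|^{-\al}}$, condition on the realization of $\Phi_{B}$, and average over the i.i.d.\ gains. Since each $g_{l}$ is $\Gamma(K,1)$, its moment generating function is $\EE[e^{-tg_{l}}]=(1+t)^{-K}$, so with $t=sp_{l}\|y_{l}\|^{-\al}$ the fading average factorizes as $\prod_{l}\big(1+sp_{l}\|y_{l}\|^{-\al}\big)^{-K}$.

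Next I would apply the PGFL of the PPP, $\EE\big[\prod_{y\in\Phi_{B}}f(y)\big]=\exp\!\big(-\lambda_{B}\int_{\mathbb{R}^{2}}(1-f(y))\,dy\big)$, with $f(y)=(1+sp\|y\|^{-\al})^{-K}$ and the common transmit power normalized, which gives
\[
\mathcal{L}_{I_{l}}(s)=\exp\!\Big(-\lambda_{B}\int_{\mathbb{R}^{2}}\big(1-(1+sp\|y\|^{-\al})^{-K}\big)\,dy\Big).
\]
Passing to polar coordinates reduces this to the radial integral $2\pi\lambda_{B}\int_{0}^{\infty}\big(1-(1+spr^{-\al})^{-K}\big)\,r\,dr$. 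Substituting $v=spr^{-\al}$ extracts a factor $s^{2/\al}$ (since $r\,dr\propto s^{2/\al}v^{-2/\al-1}\,dv$), and expanding $1-(1+v)^{-K}$ by the binomial theorem turns each summand into a standard Beta integral $\int_{0}^{\infty}v^{m-2/\al-1}(1+v)^{-K}\,dv=\mathrm{B}\big(K-m+\tfrac{2}{\al},\,m-\tfrac{2}{\al}\big)$; collecting the prefactors $\tfrac{2\pi\lambda_{B}}{\al}\binom{K}{m}$ then reproduces $\mathcal{C}(\al,K)$ and the $s^{2/\al}$ scaling, proving the claim.

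The main obstacle is the careful treatment of the radial integral rather than the probabilistic set-up: one must assume $\al>2$ so that the integrand is integrable at both $r\to0$ and $r\to\infty$ and the substitution yields convergent Beta integrals, and one must invoke the symmetry $\mathrm{B}(x,y)=\mathrm{B}(y,x)$ to match the order of the arguments in the stated form of $\mathcal{C}(\al,K)$. The PGFL step itself is routine for a PPP, but it hinges on the independence of $\{g_{l}\}$ from $\Phi_{B}$, which I would state explicitly before factorizing the fading average.
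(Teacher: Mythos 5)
Your proposal is correct and follows essentially the same route as the paper's own proof: factorizing $e^{-sI_{l}}$ over the interfering BSs, averaging the i.i.d.\ $\Gamma(K,1)$ gains via their Laplace transform $(1+t)^{-K}$, applying the PGFL of the PPP, converting to polar coordinates, and reducing the radial integral by binomial expansion to Beta functions (you merely substitute $v=spr^{-\al}$ before expanding, whereas the paper expands first and then substitutes). Your explicit attention to the convergence condition $\al>2$ and to the symmetry $\mathrm{B}(x,y)=\mathrm{B}(y,x)$ is a small improvement in rigor over the paper, whose derivation also implicitly starts the sum at $m=1$ (the $m=0$ lower limit in the stated $\mathcal{C}(\al,K)$ is a typo), exactly as your expansion of $1-(1+v)^{-K}$ does.
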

\begin{proof}
See Appendix~\ref{LaplaceTransformproof}.
\end{proof}
\begin{lemma}\label{LaplaceTransformGamma} 
 The Laplace transforms of the parts, describing the ARTHIs $I_{ \etv_{\mathrm{t}}}$ and $I_{ \etv_{\mathrm{r}}}$, are given by
 \begin{align}
\mathcal{L}_{I_{ \etv_{j}}}\!\left(s \right)&=\frac{1}{\left( 1+{q_{j}}s \right)^{M}}, 
 \end{align}
where $j=\mathrm{t}$ or $\mathrm{r}$ and $q_{t}=\delta_{\mathrm{t}}^{2}$ or  $q_{r}=\delta_{\mathrm{r}}^{2}$, respectively.
\end{lemma}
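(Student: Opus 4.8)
The plan is to read this off as the Laplace transform of a scaled Gamma random variable, so the argument is short and self-contained. By Proposition~\ref{SINR} each distortion power is, up to a deterministic scale, a $\Gamma(M,1)$ variate: $I_{\etv_{\mathrm{t}}}$ is distributed as $q_{\mathrm{t}}$ times a $\Gamma(M,1)$ random variable with $q_{\mathrm{t}}=\delta_{\mathrm{t}}^2$, and likewise $I_{\etv_{\mathrm{r}}}$ with $q_{\mathrm{r}}=\delta_{\mathrm{r}}^2$. Hence it suffices to combine the known Laplace transform of the Gamma law with the elementary scaling rule.

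First I would write $\mathcal{L}_{I_{\etv_j}}(s)=\EE\!\left[e^{-sI_{\etv_j}}\right]$ and reduce to the unit-scale case $X\sim\Gamma(M,1)$, whose density is $f_X(t)=t^{M-1}e^{-t}/(M-1)!$ for $t\ge 0$. A direct integration gives
\begin{align}
\EE\!\left[e^{-sX}\right]=\frac{1}{(M-1)!}\int_{0}^{\infty}\!t^{M-1}e^{-(1+s)t}\,\mathrm{d}t=\frac{1}{(1+s)^{M}},\nn
\end{align}
where the last equality follows from the substitution $u=(1+s)t$ together with the definition of the Gamma function. I would then invoke the scaling property $\EE[e^{-s(q_jX)}]=\EE[e^{-(q_js)X}]$, which merely substitutes $q_js$ for $s$ and yields $\mathcal{L}_{I_{\etv_j}}(s)=(1+q_js)^{-M}$ for $j=\mathrm{t},\mathrm{r}$, as claimed.

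There is no real analytic obstacle; the work is entirely in the bookkeeping of two quantities. The first is the deterministic scale: I would confirm that the multiplier attached to the $\Gamma(M,1)$ variate entering the transform is exactly $q_j$, consistent with the normalization that absorbs the transmit power $p_k$ through $\tilde{T}=Tp_k^{-1}$ in the definition $s=\tilde{T}l^{a}$ used by Theorem~\ref{theoremCoverageProbability}. The second is the shape parameter: it equals the antenna count $M$ (and not $\Delta=M-K+1$, which governs the desired channel power $h_k$) because both distortion vectors are $M$-dimensional, which is precisely what makes the two Laplace transforms share the common $M$-th power.
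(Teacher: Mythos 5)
Your proof is correct and follows essentially the same route as the paper, which simply observes that $I_{\etv_{\mathrm{t}}}$ and $I_{\etv_{\mathrm{r}}}$ are scaled gamma random variables and invokes the standard Laplace transform; you merely fill in the elementary integral $\EE[e^{-sX}]=(1+s)^{-M}$ for $X\sim\Gamma(M,1)$ and the scaling substitution explicitly. Your remark reconciling the scale $q_j=\delta_j^2$ with the factor $p_k$ absorbed into $\tilde{T}=Tp_k^{-1}$ is a worthwhile clarification, since the paper's appendix states the scale as $p_k\delta_j^2$ while the lemma drops the $p_k$ without comment.
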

\begin{proof}
Both  Laplace transforms are easily obtained, since $I_{ \etv_{\mathrm{t}}}$ and $I_{ \etv_{\mathrm{r}}}$ follow scaled gamma distributions with scaled parameters ${p_{k}\delta_{\mathrm{t}}^{2}}$ and $p_{k}{\delta_{\mathrm{r}}^{2}}$, as mentioned in Appendix~\ref{SINRproof}.
\end{proof}
 
The numerical evaluation of~\eqref{coverageprobability} is complex and time-consuming because it involves the calculation of the derivatives of Laplace transforms. 
\begin{remark}
In the general case, where $\Delta>1$, the derivatives of the Laplace transforms, being composite functions,  are calculated by applying  Fa\`{a} di Bruno's identity. In the case of $\mathcal{L}_{I_{l}}\!\left(s \right)$, if we denote the composite function as $\left( f \circ g \right)\left( s \right)$, then $f\left( s \right)=\exp\left( s \right)$ and $g\left( s \right)=- \tilde{s}^{\frac{2}{a}} \mathcal{C}\left( \al, \mathcal{M}\right)$. Similarly, in the case of $\mathcal{L}_{I_{ \etv_{j}}}\!\left(s \right)$, we have $f\left( s \right)=\frac{1}{s^{M}}$ and $g\left(s \right)=1+{q_{j}^{2}}s$.
\end{remark}
\begin{corollary}
In the special case of full  space
division multiple access (SDMA) $(M=K)$, the upper bound of the coverage probability with residual transceiver impairments, described by Theorem~\ref{theoremCoverageProbability}, is given by
\begin{align}
\! \!\!p_{c}\left(  T,\lambda_{B},\alpha,\delta_{\mathrm{t}},\delta_{\mathrm{r}} \right)\!\le\!\!\lambda_{B}\!\! \int_{x \in  \mathds{R}^2}\!\!\mathcal{L}_{I_{ \etv_{\mathrm{r}}}}\!\!\left(s \right)                  
                 \mathcal{L}_{I_{ \etv_{\mathrm{t}}}}\!\!\left(s \right)
                   \mathcal{L}_{I_{l}}\!\!\left(s \right)\mathrm{d}x.\!\!
\end{align}
\end{corollary}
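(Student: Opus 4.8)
The plan is to specialize the general bound of Theorem~\ref{theoremCoverageProbability} to the regime $M=K$, in which the triple summation in~\eqref{coverageprobability} degenerates to a single term. First I would invoke Proposition~\ref{SINR}, which fixes the shape parameter of the desired channel power at $\Delta=M-K+1$; imposing $M=K$ immediately yields $\Delta=1$. Consequently, the outer index $i$ in~\eqref{coverageprobability} ranges only over $\sum_{i=0}^{\Delta-1}=\sum_{i=0}^{0}$, so that $i=0$ is the sole surviving value.

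Next I would track how the two inner sums collapse once $i=0$ is enforced. The index $k$ runs from $0$ to $i=0$ and the index $n$ runs from $0$ to $i-k=0$, so the only admissible triple is $(i,k,n)=(0,0,0)$. Substituting these values, every combinatorial and power factor reduces to unity, since $\binom{0}{0}=1$, $(-1)^{0}=1$, $\tilde{T}^{\,0}=1$, $s^{0}=1$, and $0!=1$; hence the scalar prefactor $\frac{(-1)^{i}\tilde{T}^{i-k}s^{k}}{i!}\binom{i}{k}\binom{i-k}{n}$ equals $1$. Simultaneously, all three differential operators become zeroth order, whence $\frac{\mathrm{d}^{0}}{\mathrm{d}s^{0}}\mathcal{L}_{I_{\etv_{\mathrm{r}}}}(s)=\mathcal{L}_{I_{\etv_{\mathrm{r}}}}(s)$, and analogously for the transmit-distortion and inter-cell interference Laplace transforms.

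With these reductions the integrand in~\eqref{coverageprobability} becomes the bare product $\mathcal{L}_{I_{\etv_{\mathrm{r}}}}(s)\,\mathcal{L}_{I_{\etv_{\mathrm{t}}}}(s)\,\mathcal{L}_{I_{l}}(s)$, and identifying the radial variable $l=\|x\|$ with the planar integration element, so that $\int_{l\in\mathds{R}^{2}}(\cdot)\,\mathrm{d}l=\int_{x\in\mathds{R}^{2}}(\cdot)\,\mathrm{d}x$, delivers the claimed expression. Because each step is a direct substitution, there is no genuine analytical obstacle; the only point demanding care is confirming that the nested index ranges truly leave no surviving term beyond $(0,0,0)$ and that the dependence through $s=\tilde{T}l^{a}$ is carried into the simplified integral unchanged. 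The corollary is therefore simply the boundary-case collapse of the summation structure already established in Theorem~\ref{theoremCoverageProbability}.
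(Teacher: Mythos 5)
Your proof is correct and matches the paper's (implicit) reasoning exactly: the paper states the corollary without proof precisely because, with $M=K$ forcing $\Delta=1$, the triple sum in Theorem~\ref{theoremCoverageProbability} collapses to the single term $(i,k,n)=(0,0,0)$ with unit prefactor and zeroth-order derivatives, leaving the bare product of the three Laplace transforms under the integral. Your careful check of the index ranges and the identification $l=\|x\|$ is exactly the routine verification the paper leaves to the reader.
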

\section{Numerical Results}
The locations of  BSs are simulated as realizations of a PPP with given density in a  sufficiently large window of $5~ \mathrm{km}~\times 6~\mathrm{km}$. Moreover, we assume that the typical user lies at the origin, and we 
calculate the desired signal strength and the interference power. The coverage probability is obtained by checking if the received SIR from at least one of the BSs is more than the target value.  The ``solid''  and ``dot'' lines designate the analytical results with no  ARTHIs and specific  ARTHIs, respectively, while the bullets represent the simulation results. 

In Fig.~$1$, the simulated coverage probability $p_{c}$ along with the proposed analytical result~\eqref{coverageprobability} are plotted against the target SIR $T$ for $\delta_{t}=0.15$ and $\delta_{r}=0$\footnote{Note that based on the proposed model the transmit and receive additive impairments have equivalent effect, although it should be stressed that in reality the BS's transmitter and user's receiver are manufactured with different quality.}. These nominal values of  ARTHIs are quite reasonable according to~\cite{Bjornson2014}. Moreover, in the same figure, we have depicted the simulated and theoretical results corresponding to ideal hardware as provided by (21) in~\cite{Dhillon2013}. Obviously, in practice, where  ARTHIs exist, $p_{c}$ worsens as lower hardware quality is used (increasing $\delta_{t}, \delta_{r}$).  In addition, we consider three different transmission strategies. In particular, we have i) single-user beamforming (SU-BF) with $M=6$, $K=1$, ii) $M=K=1$, i.e, each BS has a  single transmit antenna (SISO), and iii) $M=K=6$, which means  full SDMA. Similar to~\cite{Dhillon2013}\footnote{In interference-limited networks, the claim having more antennas is always beneficial is not necessarily correct, as it heavily depends on how the transmit antennas are used and which transmission/reception scheme is employed. For instance, using the transmit antennas for multi-stream transmission (SDMA) is not beneficial (from a system perspective) in most cases (with treating interference as noise). Furthermore, the  correctness of the claim depends on the performance metric we study. In general, there are regimes/situations where SU-BF is better than SISO. The main reason why SU-BF can outperform SISO is that in addition to the proximity gain enjoyed by the SISO due to extreme densification, the SU-BF transmission presents an  additional beamforming gain. Also,   we should take into account that the growth of the received signal and interference for increasing $\lambda_{B}$  is the same.}, we show that  SU-BF transmission is preferable with comparison to SISO, while the latter is better than SDMA. In other words, we verify that it is better to serve a single user in each resource block, either by SISO or SU-BF, instead of serving multiple users. However, herein and with comparison to~\cite{Dhillon2013}, we illustrate how, given this property, the coverage probability varies with SIR in the presence of the ARTHIs.

SU-BF case has an additional beamforming gain;

Fig.~2 further illustrates the effect of increasing the transmit BS power. This exposes a quite insightful property, since the ARTHIs are  power-dependent. Specifically, increasing the ARTHIs, the space $l_{i}$ for $i=1,~2$, representing the gap between the lines with no hardware impairments and the  practical scenario with  ARTHIs, becomes larger, i.e., $l_{1}>l_{2}$ for $\rho=15~\mathrm{dB}$ and $\rho=5~\mathrm{dB}$, respectively.
\begin{figure}[!h]
 \begin{center}
  \includegraphics[width=0.98\linewidth]{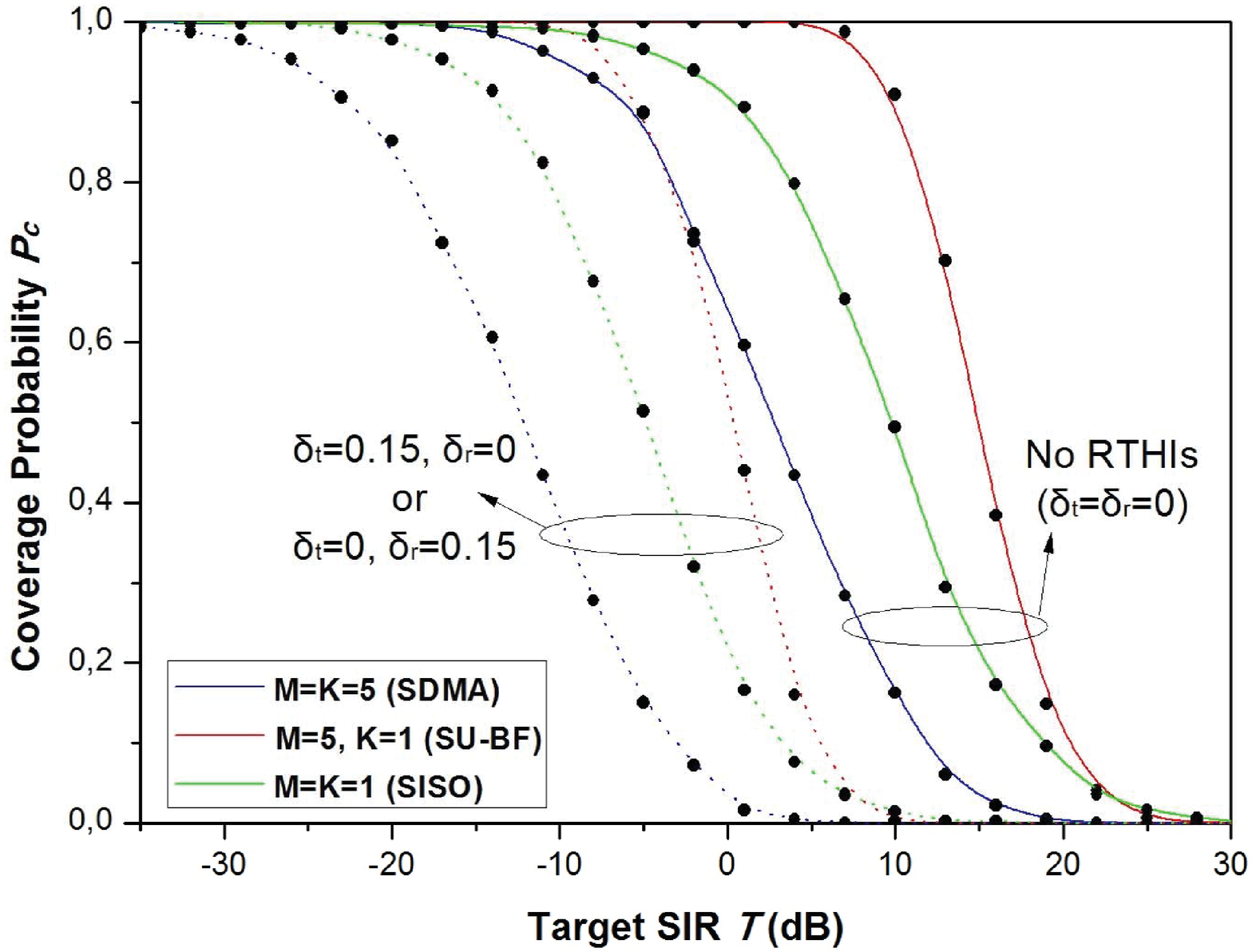}
 \caption{\footnotesize{Coverage probability versus the target SIR $T$ for varying  ARTHIs and various   transmission techniques ($\al=3,\lambda_{B}=3 $, $\lambda_{u}=6 \lambda_{B}$,  $p= 23~\mathrm{dB}$.)}}
 \label{M=100}
 \end{center}
 \end{figure}\begin{figure}[!h]
 \begin{center}
 \includegraphics[width=0.98\linewidth]{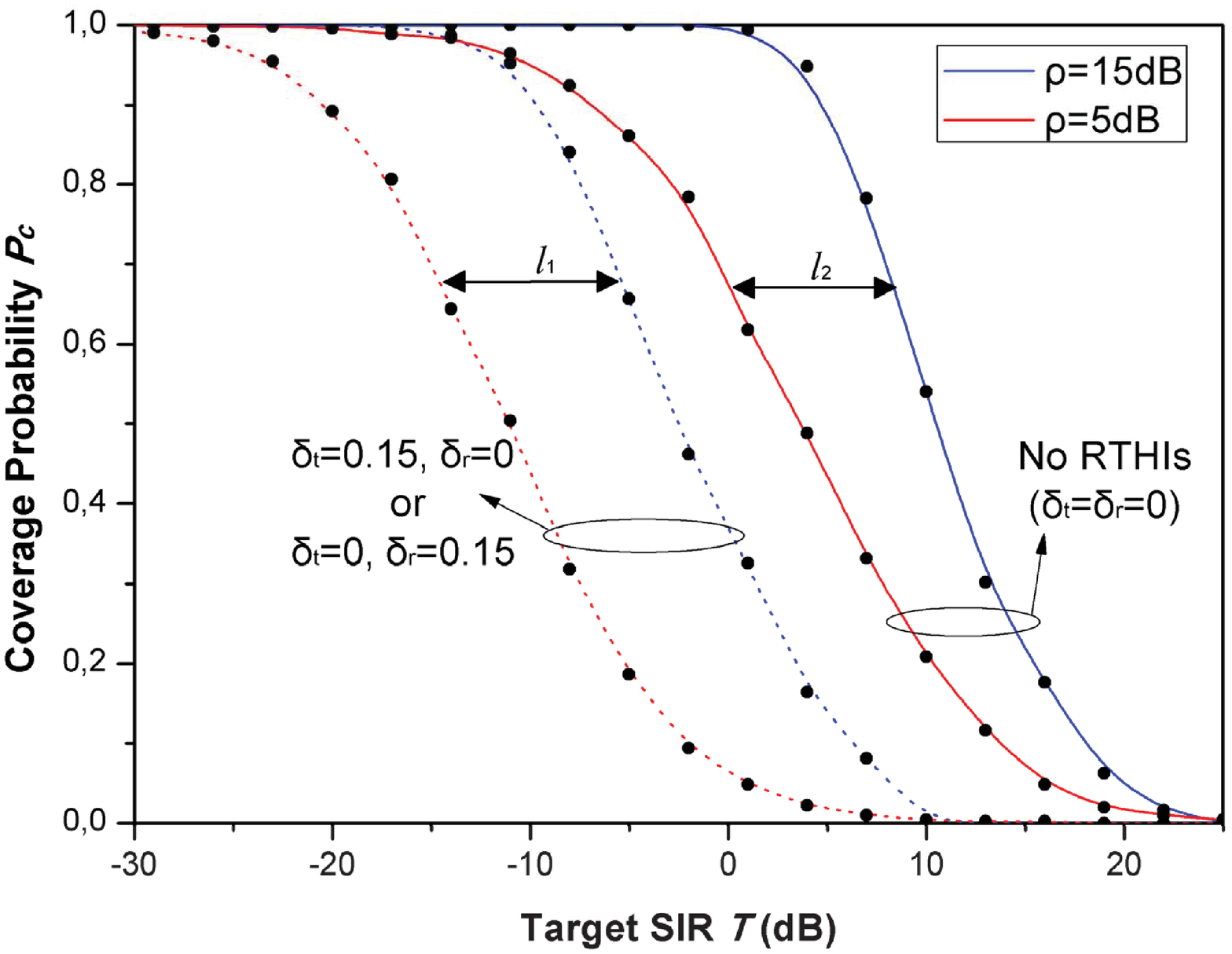}
 \caption{\footnotesize{Coverage probability versus the target SIR $T$ for varying  ARTHIs and various   transmit powers ($\al=3,\lambda_{B}=3 $, $\lambda_{u}=6 \lambda_{B}$.)}}
 \label{M=100}
 \end{center}
 \end{figure}
\section{Conclusion}
Contrary to existing works on HCNs, assuming ideal hardware,  this paper studied the impact of the ARTHIs. In particular, based on a general realistic scenario, where a BS can employ several multiple-antennas downlink transmission strategies after taking into account for cell association, we obtained the coverage probability  in the presence of the unavoidable ARTHIs. It was showed that the ARTHIs degrade the coverage capability. More importantly, it was showed that this degradation is higher as the transmit power increases.


\begin{appendices}
\section{Proof of Proposition~\ref{SINR}}\label{SINRproof}
We assume that the columns of the precoding matrix $\bW_{k}$ equal the normalized columns of $\bH^{\H}\left( \bH \bH^{\H} \right)^{-1}$, i.e., $\bW_{k}=\bar{\bH}^{\H}\left( \bar{\bH} \bar{\bH}^{\H} \right)^{-1}$, where $\bar{\bH}=\left[ \bar{\bh}_{1},\ldots, \bar{\bh}_{k}\right]^{\H}\in \mathbb{C}^{\left( K \times M \right)}$ with columns $\bar{\bh}_{k}=\frac{\bh_{k}}{\|\bh_{k}\|}$. In such case, the desired signal power, given by $h_{k}=|\bar{\bh}_{k}^{\H}\bw_{k,k}|\cdot \|\bh_{k}\|^{2}$, is $\Gamma\left( \Delta,1 \right)$ distributed with $\Delta=M-K+1$, since it equals to the product of two independent random variables distributed as $B\left( M-K+1,K-1 \right)$ and $\Gamma\left( M,1 \right)$, respectively~\cite{Jindal2006}.  $I_{ \etv_{\mathrm{t}}}$ is obtained after taking the expectation over the transmit distorion noise of the tagged BS  $I_{ \etv_{\mathrm{t}}}={p_{k}\delta_{\mathrm{t}}^{2}}\|\bh_{k}\|^2$, which follows a scaled $\Gamma (M,1)$ distribution. A similar result is obtained after taking the expectation over the receive distorion noise, i.e., $I_{ \etv_{\mathrm{r}}}=p_{k}{\delta_{\mathrm{r}}^{2}}\|\bh_{k}\|^2$, The other term in the denominator, concerning the interference from other BSs, $I_{l}$, is expressed in terms of the sum of two independent gamma distributed random variables $g_{l}=|\bg_{l}^{\H}\bs_{l}|^2\sim \Gamma (K,1)$. Note that $g_{l}$ is a  $\Gamma (K,1)$ random variable because the precoding matrices $\bW_{l}$ coming from other BSs have  unit-norm and are independent from the normalized $\bar{\bg}_{l}$. Therefore, $g_{l}=\bar{\bg}^{\H}_{l}\bW_{l}$  is a linear combination of $K$ complex normal random variables, i.e., $g_{l}\sim \Gamma(K,1)$.

\section{Proof of Theorem~\ref{theoremCoverageProbability}}\label{CoverageProbabilityproof}
According to the definition of $p_{c}\left(  T,\lambda_{B},\alpha,\delta_{\mathrm{t}},\delta_{\mathrm{r}} \right)$ and by means of appropriate substitution of the SIR $\gamma_{k}$, we have
\begin{align}
 \!\!&p_{c}\left(  T,\lambda_{B},\alpha,\delta_{\mathrm{t}},\delta_{\mathrm{r}} \right)=\EE\left[\mathds{1}\!\left( 	\underset{x \in \Phi_{B}}{\cup} \mathrm{SIR}\left( x \right)>T \right)  \right]\\
 &\le\EE\left[ 	\underset{x \in \Phi_{B}}{\cup} \mathds{1}\left(  \mathrm{SIR} \right)>T   \right]\\
\! \!&=\!\!\EE \left[\sum_{x \in \Phi_{B}}\!\!\mathbb{P}\left[ \mathrm{SIR}>T|l \right]\right]\label{coverage_definition}\\
                                   \! \!&=\!\!  \lambda_{B}\int_{x \in  \mathds{R}^2}\!\!\EE\left[ \mathbb{P}\left[   h_{k}>
                                  \tilde{T}  \left( I_{ \etv_{\mathrm{t}}}+I_{ \etv_{\mathrm{r}}} \right) +  \tilde{T} l^{\al} I_{l}|l \right]\right]\mathrm{d}x,\label{coverage_definition1} 
\end{align}
where in~\eqref{coverage_definition}, we have used the Campbell-Mecke Theorem [30]. Given that $h_{k}$ is gamma distributed, i.e., $ h_{k} \overset{\tt d}{\sim}
   \Gamma\left( \Delta_{k},1 \right)$, we have 
     $\mathbb{P}_{h_{k}}\left( z \right) =e^{-z}\sum_{i=0}^{\Delta-1}\frac{z^{i}}{i!}$.
Thus, the integrable part of~\eqref{coverage_definition1} can be written as
\begin{align}
 &\mathbb{P}\!\left[h_{k}\!>\!\tilde{T}  \left( I_{ \etv_{\mathrm{t}}}+I_{ \etv_{\mathrm{r}}} \right) +  \tilde{T} l^{\al} I_{l}|l \right]=e^{-\left( \tilde{T} \left( I_{ \etv_{\mathrm{t}}}+I_{ \etv_{\mathrm{r}}} \right) +  \tilde{T} l^{\al} I_{l} \right)}
\nn\\
&\times\sum_{i=0}^{\Delta-1}\sum_{k=0}^{i}\binom{i}{k}\frac{\left(  \tilde{T}  \left( I_{ \etv_{\mathrm{t}}}+I_{ \etv_{\mathrm{r}}} \right)  \right)^{i-k}\left(     \tilde{T} l^{\al} I_{l}  \right)^{k}}{i!},\label{coverage1}
                                \end{align}
                                where in~\eqref{coverage1}, we have applied the binomial theorem. Taking the expectation, we obtain 
\begin{align}
 &\!\EE\!\left[\!\mathbb{P}\!\left[h_{k}\!>\!\tilde{T}\! \left( I_{ \etv_{\mathrm{t}}}\!+\!I_{ \etv_{\mathrm{r}}} \right)\! + \! \tilde{T} l^{\al} I_{l}|l \right]\right]\!=\!\!
                    \sum_{i=0}^{\Delta-1}\!\sum_{k=0}^{i}\!\sum_{n=0}^{i-k}\!\!\binom{i}{k}\!\binom{i\!-\!k}{n}\nn\\
                  &\!\times\!\frac{\left( -1 \right)^{i}\!\tilde{T}^{i-k}s^{k}  }{i!}\frac{\mathrm{d}^{n}}{\mathrm{d}s^{n}}\mathcal{L}_{I_{ \etv_{\mathrm{r}}}}\!\!\left(s \right)\!                  
                  \frac{\mathrm{d}^{i-k-n}}{\mathrm{d}s^{i-k-n}}\mathcal{L}_{I_{ \etv_{\mathrm{t}}}}\!\!\left(s \right)\!\frac{\mathrm{d}^{k}}{\mathrm{d}s^{k}} \mathcal{L}_{I_{l}}\!\!\left(s \right)\!,\!\label{coverage3}
\end{align}
where we have set $\tilde{T}=T p_{k}^{-1}$ and $ s=\tilde{T} l^{\al}$. In~\eqref{coverage1}  we have made use of the Binomial theorem, and in~\eqref{coverage3} we have used the definition of the Laplace Transform $\mathbb{E}_{I }\left[  e^{-s I }\left( s I  \right)^{i}\right]=s^{i}\mathcal{L}\{t^{i}g_{I }\left( t \right)\}\left( s \right)$ and the Laplace identity $t^{i}g_{I }\left( t \right)\longleftrightarrow \left( -1 \right)^{i}\frac{\mathrm{d}^{n}}{\mathrm{d}^{n}s}\mathcal{L}_{I }\{g_{I }\left( t \right)\}\left( s \right)$. 
The Laplace transform $\mathcal{L}_{I_{l}}\left( s \right)$ is obtained by means of Proposition~\ref{LaplaceTransform}, while $\mathcal{L}_{I_{ \etv_{\mathrm{t}}}}$ and    $\mathcal{L}_{I_{ \etv_{\mathrm{r}}}}\!\left(s \right)$ are provided by Lemma~\ref{LaplaceTransformGamma}. Substitution of~\eqref{coverage3} into~\ref{coverage_definition1} concludes the proof.              
\section{Proof of Proposition~\ref{LaplaceTransform}}\label{LaplaceTransformproof}
Having defined $g_{l}$, accounting for  the interference channel coefficient, which has  identical distribution for all $l$, and for the transmit channel impairments from other BSs, the Laplace transform of the interference part $\mathcal{L}_{I_{l}}\left( s \right)$ can be derived as
\begin{align}
 &\!\!\mathcal{L}_{I_{l}}\left(  s \right)=\mathbb{E}_{I_{l}}\left[  e^{-s {I_{l}}}\right]=\mathbb{E}_{I_{l}}\left[  e^{-s \sum_{l\in \Phi_{B}\backslash x} p_{l}g_{l} y^{-\al}}\right]\nn\\
  &\!\mathop = \mathbb{E}_{\Phi_{B},g_{l}}\left[\prod_{l \in \Phi_{B}\backslash x} e^{-s p_{l} g_{l} y^{-a}} \right] \label{laplace 2}\\
&\!\mathop = \mathbb{E}_{\Phi_{B}}\left[\prod_{l \in \Phi_{B}\backslash x}  \mathcal{L}_{g_{l}}\left( s p_{l} y^{-a} \right)  \right]\label{laplace 3}\\
  &\!\mathop =   \mathrm{exp}\left( -{\lambda_{B}}\int_{\mathbb{R}^{2}} \left( 1-\mathcal{L}_{g_{l}}\left( s p_{l} y^{-a} \right)   \right)\mathrm{d}y\right)\label{laplace 4}\\
&\! \!\mathop =\mathrm{exp}\!\Bigg(\Bigg.\!\!\!-\!2 \pi \lambda_{B}\!\!\!\int_{0}^{\infty}  \!\!\bigg(\bigg. \frac{ \sum_{m=1}^{K}\binom{K}{m} \left( {s}p_{l}  r^{-a} \right)^{m}}{\left( 1\!+\!{s}p_{l}  r^{-a} \right)^{K}}r\mathrm{d}r\Bigg)\!\!\label{laplace5} \\
&\!\!\! \!\mathop =\mathrm{exp}\!\Bigg(\Bigg.\!\!-\frac{2 \pi \lambda_{B} {p_{l}}^{\frac{2}{a}}{s}^{\frac{2}{a}}}{\al}\!\!\sum_{m=1}^{K}\!\!\binom{\!K\!}{\!m\!}\mathrm{B}\!\left(\! K\!-\!m\!+\!\frac{2}{a},m\!-\!\frac{2}{a} \right)\!\!\!\!\Bigg),\!\!\!\nn
\end{align}
where  $\mathrm{B}\left( K-m+\frac{2}{a},m-\frac{2}{a} \right)$ is the Beta function defined in~\cite[Eq.~(8.380.1)]{Gradshteyn2007}.
Note that~\eqref{laplace 2} comes from the independence among the  locations of the BSs, while~\eqref{laplace 2} holds due to the independence between the spatial and the fading distributions. Using the property of PGFL of PPP~\cite{Chiu2013}, we obtain~\eqref{laplace 4}, and the next step follows by substituting the Laplace transform of $g_{l}$, obtained in Lemma $2$. Application of the Binomial theorem in~\eqref{laplace 4}, and conversion of the Cartesian coordinates to polar coordinates results to~\eqref{laplace5}. The proof is concluded with the calculation of the integral. Specifically, it is 
obtained after substitution of $\left( 1+r^{-\al} \right)^{-1}\rightarrow t$, many algebraic manipulations, and the use of~\cite[Eq.~(8.380.1)]{Gradshteyn2007}.
\end{appendices}

\bibliographystyle{IEEEtran}

\bibliography{mybib}
\end{document}